\newdefinition{definition}{Definition}
\newtheorem{theorem}{Theorem}
\newproof{proof}{Proof}
\newtheorem{lemma}{Lemma}
\newdefinition{remark}{Remark}
\newdefinition{example}{Example}
\begin{document}
\begin{frontmatter}
\title{On the set of uniquely decodable codes with a given sequence of code word lengths}
\author[rvt]{Adam Woryna}
\ead{adam.woryna@polsl.pl}
\address[rvt]{Silesian University of Technology, Institute of Mathematics, ul. Kaszubska 23, 44-100 Gliwice, Poland}

\begin{abstract}
For every natural number $n\geq 2$ and every finite sequence $L$ of natural numbers, we consider the set $UD_n(L)$ of all uniquely decodable codes over an $n$-letter alphabet  with the sequence $L$ as the sequence of code word lengths, as well as its subsets $PR_n(L)$ and $FD_n(L)$ consisting of, respectively,  the prefix codes and the codes with  finite delay.  We derive the estimation for the quotient $|UD_n(L)|/|PR_n(L)|$, which allows to characterize those sequences $L$ for which the equality $PR_n(L)=UD_n(L)$ holds. We also characterize those sequences $L$ for which the equality $FD_n(L)=UD_n(L)$ holds.
\end{abstract}

\begin{keyword}
uniquely decodable code\sep prefix code \sep code with finite delay \sep Kraft's procedure \sep Sardinas-Patterson algorithm

\MSC[2010] 68R15 \sep 68W32 \sep 94A45\sep 94A55 \sep 20M35
\end{keyword}

\end{frontmatter}

\section{Preliminaries and the statement of the results}

Let $X$ be an alphabet with $n:=|X|\geq 2$ letters. We refer to a finite sequence
$$
C=(v_1,\ldots, v_m),\;\;\;m\geq 1
$$
of words over $X$  as {\it a code} and to the words  $v_i\in X^*$ ($1\leq i\leq m$) as the {\it code words}. In particular,   our convention differs  a bit from the more usual one, where  codes are considered as sets of words rather than  sequences of words. The code $C$ is called {\it uniquely decodable} if for all $l,l'\geq 1$ the equality $v_{i_1}v_{i_2}\ldots v_{i_l}=v_{j_1}v_{j_2}\ldots v_{j_{l'}}$
with $1\leq i_t, j_{t'}\leq m$ ($1\leq t\leq l$, $1\leq t'\leq l'$) implies $l=l'$ and $i_t=j_t$ for every $1\leq t\leq l$.  Thus every  uniquely decodable code must be an injective sequence of non-empty words. In the algebraic language,  one could say that the  code $C$ is uniquely decodable  if and only if the monoid  generated by the set $\{v_1, \ldots, v_m\}$ (with concatenation of words as the monoid operation) is a free monoid of rank $m$ freely generated by this set, or that this set is an $m$-element basis for this monoid.
If for all $1\leq i,j\leq m$ the condition: $v_i$ {\it is a prefix (initial segment) of} $v_j$ implies $i=j$, then $C$ is called a {\it prefix code}.

The prefix codes are the most useful examples of uniquely decodable codes and, in a sense, they are universal for all uniquely decodable codes. Namely, according to the Kraft-McMillan theorem (\cite{6}), for every finite sequence $L=(a_1, \ldots, a_m)$ of  natural numbers the following three statements are equivalent: (1) there exists a uniquely decodable code $C=(v_1, \ldots, v_m)$ with the sequence $L$ as the sequence of code word lengths, i.e. $|v_i|=a_i$ for every $1\leq i\leq m$; (2) there exists a prefix code $C'=(v_1', \ldots, v_m')$ with the sequence $L$ as the sequence of code word lengths; (3) the inequality  $\sum_{i=1}^m n^{-a_i}\leq 1$ holds.

Uniquely decodable codes of length $m\leq 2$ are exceptional, as every such a code has finite delay (\cite{2}). Recall that  a code $C$ has finite delay if  there is a number $t$ with the following property: picking up the consecutive letters of an arbitrary word $u\in X^*$ which can be   factorized into the code words, it is enough to pick up at most $t$ first letters of $u$ to be sure which code word begins $u$ (see also~\cite{0}). The smallest number $t$ with this property is called the {\it delay} of the code $C$. If such a number does not exist, then we say that the code has  {\it infinite delay}. Obviously, every   prefix code has   finite delay (which is not greater that the maximum length of a code word) and every  code  with  finite delay must be uniquely decodable. It turns out (see Section~6.1.2 in~\cite{4} and  Proposition~6.1.9 therein) that a code $C=(v_1,\ldots, v_m)$ has  infinite delay if and only if there is an infinite word $u\in X^\omega$ and two  factorizations
\begin{eqnarray*}
u&=&v_{i_1}v_{i_2}v_{i_3}\ldots,\\
u&=&v_{j_1}v_{j_2}v_{j_3}\ldots
\end{eqnarray*}
into code words such that $v_{i_1}\neq v_{j_1}$. If $m\geq 3$, then there are uniquely decodable codes of length $m$ which have  infinite delay.

\begin{example}\label{p1}
The code $C=(10,100,000)$ has  infinite delay because of the following two factorizations of the infinite word  $u=10^\infty$ into the code words:
\begin{eqnarray*}
10-000-000-000-\ldots,\\
100-000-000-000-\ldots.
\end{eqnarray*}
The code  $C$ is also uniquely decodable, as its {\it reverse}  $C^R=(01,001,000)$ is a  prefix code (we use the well known fact that a code is uniquely decodable if and only if its reverse  is uniquely decodable).
\end{example}

For every finite sequence $L$ of natural numbers we denote by $UD_n(L)$ the set of all  uniquely decodable codes over the alphabet $X$ with the sequence $L$ as the sequence of code word lengths. We also consider the subset $PR_n(L)\subseteq UD_n(L)$ of all prefix codes and the subset $FD_n(L)\subseteq UD_n(L)$ of all codes with finite delay. Thus, we have the inclusions $PR_n(L)\subseteq FD_n(L)\subseteq UD_n(L)$ and the set $UD_n(L)$ is non-empty if and only if the set $PR_n(L)$ is non-empty. If $L$ is constant, then each code in $UD_n(L)$ is a block code and we obviously have in this case: $PR_n(L)=UD_n(L)$. As we mentioned above, if the length of $L$ is 1 or  2, then  $FD_n(L)=UD_n(L)$.

The aim of this work is to characterize  those sequences $L$ for which the equality  $PR_n(L)=UD_n(L)$ holds, as well as those sequences $L$ for which $FD_n(L)=UD_n(L)$. For the first characterization, we modify the Kraft's procedure (\cite{5}) describing the construction of an arbitrary prefix code $C\in PR_n(L)$. This allows us to  obtain the following estimation for the quotient $|UD_n(L)|/|PR_n(L)|$ in the case when $L$ is non-constant.
\begin{theorem}\label{t4}
Let $L$ be a non-constant sequence such that the set $UD_n(L)$ is non-empty. Then we have
$$
\frac{|UD_n(L)|}{|PR_n(L)|}\geq 1+\frac{r_ar_b}{|PR_n((a,b))|},
$$
where  $a$ and $b$ are arbitrary two different values of $L$ and  $r_a$ (resp. $r_b$) is the number of those elements in $L$ which are equal to  $a$ (resp.  to $b$).
\end{theorem}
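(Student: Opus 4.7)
The natural approach is a double-counting argument. Let $A=\{i:L_i=a\}$ and $B=\{j:L_j=b\}$, so $|A|=r_a$ and $|B|=r_b$. I would consider the set of configurations
$$
S=\bigl\{(C,i,j):C\in PR_n(L),\ i\in A,\ j\in B\bigr\},
$$
of cardinality $|PR_n(L)|\,r_a r_b$, and build a map $\Psi:S\to UD_n(L)\setminus PR_n(L)$ each of whose fibres has size at most $|PR_n((a,b))|$. Combined with the inclusion $PR_n(L)\subseteq UD_n(L)$, this yields
$$
|UD_n(L)|\geq |PR_n(L)|+\frac{|S|}{|PR_n((a,b))|},
$$
which is exactly the claimed inequality.

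For the definition of $\Psi(C,i,j)$, I would modify the $i$-th and $j$-th words of $C=(v_1,\ldots,v_m)$ to a new pair $(w_i,w_j)$ of the same respective lengths $a,b$ in which one word is a prefix of the other, so that the output leaves $PR_n(L)$, while the global code remains uniquely decodable. The natural way to do this canonically is to read the new pair off the remaining code words of $C$, in the spirit of the modified Kraft procedure announced in the introduction, so that the insertion does not create a new admissible factorization. Unique decodability of the output can then be verified using the Sardinas--Patterson algorithm, checking that no residual word of the algorithm ever coincides with a code word.

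For the fibre bound, I would argue that given $D=\Psi(C,i,j)$ the distinguished pair of positions $(i,j)$ is recoverable from $D$ alone (for example, as the pair of indices realising the prefix relation inserted by the construction). Once $(i,j)$ is identified, the only undetermined data in the original $C$ are the words $v_i$ and $v_j$, and since $C$ is a prefix code the pair $(v_i,v_j)$ must lie in $PR_n((a,b))$, giving at most $|PR_n((a,b))|$ preimages.

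The main obstacle is the unique-decodability verification for the modified code: replacing two words in a UD code can in principle create new ambiguities involving the other $m-2$ words, so the choice of $(w_i,w_j)$ must be engineered in terms of the specific structure of $C$ rather than chosen freely. This is where the modification of the Kraft procedure, together with Sardinas--Patterson reasoning, is expected to carry the weight of the argument; the fibre bound and the concluding arithmetic are then routine. A secondary subtlety is ensuring that the recoverability of $(i,j)$ from $D$ is unambiguous, which constrains how the inserted prefix relation is allowed to interact with prefix relations already present among the unchanged code words.
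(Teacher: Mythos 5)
Your counting skeleton is sound: a map $\Psi$ from triples $(C,i,j)$ into $UD_n(L)\setminus PR_n(L)$ with fibres of size at most $|PR_n((a,b))|$ would indeed give the inequality, and your observation that the fibre over a fixed $(i,j)$ is controlled by the pair $(v_i,v_j)\in PR_n((a,b))$ is correct. But the proof has a genuine gap where you yourself locate it: the construction of $\Psi$ and the verification that the modified code is uniquely decodable are never carried out, and you explicitly defer them to an unspecified Sardinas--Patterson analysis that is ``expected to carry the weight of the argument.'' That verification is not routine --- replacing two words of an arbitrary prefix code by a pair in which one is a prefix of the other can easily destroy unique decodability, and there is no canonical choice of $(w_i,w_j)$ that obviously works uniformly over all $C\in PR_n(L)$. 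As written, the proposal is a plan rather than a proof.

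The paper's key idea, which your outline misses, sidesteps the unique-decodability verification entirely by using \emph{reversal}. One fixes the two specific words $w_a=0^{a-1}1$ and $w_b=0^{b-1}1$ (which are prefix-compatible with each other since $a<b$), and counts the subfamily $PR_{n,a,b}(L)\subseteq PR_n(L)$ of prefix codes containing both, via a modified Kraft procedure that forbids the all-zero word at each length up to $b$. This yields the exact count $|PR_{n,a,b}(L)|=\frac{r_ar_b}{n^b(N_{i_0}-1)}|PR_n(L)|\geq r_ar_b\,|PR_n(L)|/|PR_n((a,b))|$. For each such $C$, the reverse $C^R$ is uniquely decodable for free (a code is UD iff its reverse is, and prefix codes are UD), yet $C^R$ is not a prefix code because $1\,0^{a-1}$ is a prefix of $1\,0^{b-1}$. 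Since reversal is injective, this injects $PR_{n,a,b}(L)$ into $UD_n(L)\setminus PR_n(L)$ with no Sardinas--Patterson argument needed. If you want to salvage your approach, the cleanest fix is to replace your map $\Psi$ on all of $S$ by this direct count of the subfamily and its reverses.
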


As a direct consequence of the above inequality, we obtain the following result.

\begin{theorem}\label{t2}
If the set  $UD_n(L)$ is non-empty, then the statements are equivalent:
\begin{itemize}
\item[(i)] $UD_n(L)=PR_n(L)$,
\item[(ii)]  $L$ is constant.
\end{itemize}
\end{theorem}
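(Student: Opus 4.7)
The direction (ii)$\Rightarrow$(i) is essentially free: if $L$ is constant, every code in $UD_n(L)$ is a block code, and a block code is trivially a prefix code (no code word can be a proper prefix of another of the same length), so $UD_n(L)=PR_n(L)$. This observation was already noted in the paragraph preceding Theorem~\ref{t4}, so I would just recall it in one line.

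For the nontrivial direction (i)$\Rightarrow$(ii) I would argue by contrapositive and reduce everything to Theorem~\ref{t4}. Assuming $L$ is non-constant, pick any two distinct values $a\neq b$ appearing in $L$, and let $r_a,r_b\geq 1$ be their multiplicities. Theorem~\ref{t4} then yields
\[
\frac{|UD_n(L)|}{|PR_n(L)|}\;\geq\; 1+\frac{r_ar_b}{|PR_n((a,b))|}.
\]
To conclude $UD_n(L)\neq PR_n(L)$, I need the fraction on the right to be a well-defined positive real number; that is, I must check that $|PR_n((a,b))|$ is a positive finite integer.

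Finiteness is immediate because for fixed word lengths there are only finitely many sequences of words over the finite alphabet $X$. Positivity follows from Kraft--McMillan: since $UD_n(L)$ is non-empty we have $\sum_{i} n^{-a_i}\leq 1$, and restricting the sum to the two chosen lengths gives $n^{-a}+n^{-b}\leq 1$, which by Kraft's theorem guarantees the existence of a prefix code with code word lengths $(a,b)$, so $|PR_n((a,b))|\geq 1$. Combined with $r_ar_b\geq 1$, the bound above is strictly greater than $1$, forcing $|UD_n(L)|>|PR_n(L)|$ and hence $UD_n(L)\neq PR_n(L)$.

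There is no real obstacle here beyond verifying these two sanity checks on $|PR_n((a,b))|$; the entire content of the theorem is packed into Theorem~\ref{t4}, and the present statement is really just the qualitative corollary that the lower bound in Theorem~\ref{t4} is strictly larger than $1$ whenever $L$ is non-constant.
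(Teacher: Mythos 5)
Your proposal is correct and follows the paper's own route: the paper derives Theorem~\ref{t2} exactly as "a direct consequence" of the inequality in Theorem~\ref{t4}, with the direction (ii)$\Rightarrow$(i) already noted in the preliminaries for constant $L$. Your sanity check that $|PR_n((a,b))|\geq 1$ is a reasonable detail to record (it also follows at once from the paper's explicit formula $|PR_n((a,b))|=n^{a+b}-n^{\max(a,b)}>0$), but it does not change the argument.
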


For the second characterization, we involve the Sardinas-Patterson algorithm (\cite{3}) and obtain the following  theorem.

\begin{theorem}\label{t3}
If the set  $UD_n(L)$ is non-empty, then the statements are equivalent:
\begin{itemize}
\item[(i)] $FD_n(L)=UD_n(L)$,
\item[(ii)] the length of   $L$ is not greater than  2 or, after  reordering the  elements of $L$, we have  $L=(a,a, \ldots, a, b)$, where   $a\mid b$.
\end{itemize}
\end{theorem}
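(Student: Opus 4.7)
The plan is to prove the two implications separately. The easy direction, (i) $\Leftarrow$ (ii), reduces to the non-trivial subcase $L = (a, a, \ldots, a, b)$ with $a \mid b$, since the case $|L| \leq 2$ is already cited in the preliminaries. Writing $b = ka$ and assuming for contradiction that some $C = (v_1, \ldots, v_m) \in UD_n(L)$ has infinite delay, I would invoke the characterization of infinite delay recalled in the introduction to obtain an infinite word $u$ with two codeword factorizations beginning in different codewords $v_{i_1} \neq v_{j_1}$. Since $C$ contains a unique length-$b$ codeword $v^*$ and the two first codewords cannot both be of length $a$ (they would both equal the first $a$ letters of $u$) nor both be $v^*$, I may assume $v_{j_1} = v^*$ and $|v_{i_1}| = a$. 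Reading the first $b$ letters of $u$ in the $i$-factorization then produces $v^* = v_{i_1} v_{i_2} \cdots v_{i_k}$ with every $v_{i_t}$ of length $a$: a length-$b$ factor would be $v^*$ itself, contradicting $v_{i_1} \neq v^*$. This yields a second decomposition of the codeword $v^*$, contradicting unique decodability.

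For (i) $\Rightarrow$ (ii) I argue by contrapositive: assuming $|L| \geq 3$ and $L$ is not of the bad form, I construct a code in $UD_n(L) \setminus FD_n(L)$. The key device is a triple of codewords $v_1, v_2, v_3$ satisfying $v_1 v_3 v_3 v_3 \cdots = v_2 v_3 v_3 v_3 \cdots$ as infinite words; combined with $v_1 \neq v_2$ this forces infinite delay. I partition the hypothesis into three subcases: (A) $L$ has at least three distinct values, (B) $L$ has exactly two distinct values $a < b$ with $b$ appearing at least twice, and (C) $L$ has exactly two distinct values $a < b$ with $b$ appearing once, $a$ appearing $m-1$ times, and $a \nmid b$. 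Working inside a binary subalphabet of $X$, Cases A and B both use $v_1 = 0 \cdot 1^{a-1}$, $v_2 = 0 \cdot 1^{b-1}$, $v_3 = 1^c$, with $c$ a third distinct value of $L$ in Case A and $c = b$ (drawn from the second copy of $b$) in Case B. A direct check gives $v_1 v_3^{\infty} = v_2 v_3^{\infty} = 0 \cdot 1^{\infty}$, while the reverses $v_1^R = 1^{a-1} \cdot 0$, $v_2^R = 1^{b-1} \cdot 0$, $v_3^R = 1^c$ form a prefix code because the chain $c \geq b > a$ rules out every potential prefix relation. The residual Kraft mass $1 - 2^{-a} - 2^{-b} - 2^{-c}$ accommodates the remaining codewords, so the Kraft--McMillan extension produces a prefix code $P$ with length sequence $L$ containing $\{v_1^R, v_2^R, v_3^R\}$, and $C := \{w^R : w \in P\}$ is the desired code (its reverse $P$ being prefix, hence uniquely decodable).

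Case C is the delicate part and I expect it to be the main obstacle. The reverse-of-prefix trick fails because $c = a$ would force $v_3^R$ to be a prefix of $v_2^R$. Instead I would take $v_1 = 1 \cdot 0^{a-1}$, $v_2 = 1 \cdot 0^{b-1}$, $v_3 = 0^a$, for which $v_1 v_3^{\infty} = v_2 v_3^{\infty} = 1 \cdot 0^{\infty}$. Unique decodability of $\{v_1, v_2, v_3\}$ follows from a direct Sardinas--Patterson computation: the iteration starts from $C_1 = \{0^{b-a}\}$ and thereafter produces only powers $0^d$ whose exponents arise from a Euclidean-style reduction of the pair $(a, b)$; the assumption $a \nmid b$ guarantees that no exponent ever reaches $a$, so $v_3 = 0^a$ never re-appears and the procedure certifies UD. The remaining $m - 3$ codewords of length $a$ can be drawn from the $2^{a-2}$ length-$a$ binary strings starting with $11$, a branch safely disjoint from those occupied by $v_1, v_2, v_3$; when Kraft's inequality allows more codewords than this branch can accommodate, further length-$a$ codewords must be selected carefully from the $0$-branch, each verified against the Sardinas--Patterson trajectory. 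This bookkeeping, ensuring that no added codeword causes any $C_k$ to meet $C_0$, again uses $a \nmid b$ essentially and is the technical heart of the proof.
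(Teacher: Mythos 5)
Your overall strategy coincides with the paper's: the same case split for (i)$\Rightarrow$(ii) (reversing a prefix code that contains two words ending in the same letter plus a constant word in the first two cases, and an explicit Sardinas--Patterson verification when $L$ has exactly the two values $a<b$ with $r_b=1$ and $a\nmid b$), and a direct ``$b$ letters suffice'' argument for (ii)$\Rightarrow$(i). However, two steps are genuinely incomplete.

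In (ii)$\Rightarrow$(i), the assertion that the first $b$ letters of $u$ decompose in the $i$-factorization as $v^*=v_{i_1}\cdots v_{i_k}$ with all factors of length $a$ is not justified by ``a length-$b$ factor would be $v^*$ itself, contradicting $v_{i_1}\neq v^*$'': a copy of $v^*$ may occur as $v_{i_t}$ for some $t\geq 2$, beginning at a position $s$ with $0<s<b$ and ending beyond position $b$, so position $b$ need not be a factorization boundary. In that situation $v^*$ is a proper prefix of $v_{i_1}\cdots v_{i_{t-1}}v^*$, and one must invoke periodicity ($w$ a prefix of $pw$ forces $w$ to be a prefix of $p^\infty$) together with $a\mid b$ to still extract a second factorization of $v^*$ into length-$a$ code words; this is precisely the extra step the paper carries out, and without it the argument does not close.

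More seriously, case C of (i)$\Rightarrow$(ii) is not finished. You verify unique decodability (in outline) only for the three words $10^{a-1}$, $10^{b-1}$, $0^a$, and you explicitly defer the extension to all $r_a$ code words of length $a$ as ``the technical heart.'' That deferral cannot be waved away: the $11$-branch reservoir of $n^{a-2}$ words is genuinely too small (already for $n=2$, $a=3$, Kraft permits $r_a$ up to $7$ while only two length-$3$ words start with $11$), so the unspecified ``careful selection from the $0$-branch'' is unavoidable and is exactly where the content of this case lies. The paper's resolution is different and cleaner: it admits \emph{arbitrary} additional length-$a$ code words except the single word $1^{a-\eta}0^\eta$ (with $\eta$ the remainder of $b-a$ modulo $a$), proves a lemma that every Sardinas--Patterson set $D_i$ is contained in the set of proper suffixes of code words, observes that the unique long code word then confines $D_i\cap D_0$ to $\{0^a\}$ or to the excluded word, and uses $a\nmid b$ to rule out $0^a$. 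Some argument of this kind is required for your construction to constitute a proof; as written, the proposal identifies the obstacle but does not overcome it. (A smaller point of the same nature occurs in cases A and B: extending the three reversed words to a full prefix code with length sequence $L$ requires excluding, at each intermediate length, the word $1^{\nu_i}$ from the Kraft procedure, and one should check as the paper does that $N_i>r_{\nu_i}$ leaves room for this.)
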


\section{The Kraft's procedure for prefix codes}\label{r3}

Let $L$ be a finite sequence of natural numbers. We now present the Kraft's method for the construction of an arbitrary code $C\in PR_n(L)$ (\cite{5}), which can be used in deriving the formula for the number of elements in the set  $PR_n(L)$.

Let  $\widetilde{L}:=\{\nu_1, \nu_2, \ldots, \nu_l\}$ be the set of values of the sequence $L$ ordered from the smallest to the largest, i.e. $\nu_1<\nu_2<\ldots<\nu_l$ and let   $r_{\nu_i}$ ($1\leq i\leq l$) be the number of those elements in $L$ which are equal to  $\nu_i$.

To construct an arbitrary code $C\in PR_n(L)$ we proceed as follows. As the code words of  length $\nu_1$, we choose arbitrarily $r_{\nu_1}$ words among all the words of length $\nu_1$. This can be done in ${n^{\nu_1}\choose r_{\nu_1}}$ ways.  Next, we must arrange the chosen words in $r_{\nu_1}$ available positions of the sequence  $C$, which can be done in $r_{\nu_1}!$ ways. For the construction of the code words of length $\nu_2>\nu_1$, we  can use the remaining  $n^{\nu_1}-r_{\nu_1}$ available words of length  $\nu_1$ as  possible prefixes; for the final segments, we can take  arbitrary words of length $\nu_2-\nu_1$. Consequently, the number of ways to construct the code words of length $\nu_2$ is equal to
$$
{n^{\nu_2-\nu_1}\cdot(n^{\nu_1}-r_{\nu_1})\choose r_{\nu_2}}.
$$
Finally, as before, we arrange the chosen words in the sequence $C$, which can be done in $r_{\nu_2}!$ ways.

By continuing this reasoning, we see that  for every $1\leq i\leq l$  the code words of length $\nu_i$ can be chosen arbitrarily among the words of length $\nu_i$ which do not have as a prefix any previously chosen code word. If $N_i$ denotes the number of such available words, then we have
$$
N_1=n^{\nu_1},\;\;\;N_{i+1}=n^{\nu_{i+1}-\nu_i}(N_i-r_{\nu_i}),\;\;\;1\leq i<l.
$$
Hence, for each $1\leq i\leq l$ the code words of length  $\nu_i$  can be constructed and arranged in the sequence $C$  in  ${N_i\choose r_{\nu_i}}r_{\nu_i}!$ ways. Consequently, we obtain the following formula
for the cardinality of the set $PR_n(L)$:
\begin{equation}\label{f}
|PR_n(L)|=\prod_{i=1}^l {N_i\choose r_{\nu_i}}r_{\nu_i}!.
\end{equation}
In particular,   if  $PR_n(L)\neq \emptyset$, then  $n^{\nu_i}\geq N_i\geq r_{\nu_i}\geq 1$ for all  $1\leq i\leq l$, which implies: $N_i>r_{\nu_i}$ for $1\leq i<l$.

\begin{example}
Let $a,b\geq 1$ be natural numbers. If $C=(v,w)$ and $|v|=a$, $|w|=b$, then in the case  $a=b$ we have: $\widetilde{L}=\{a\}$ and $r_a=2$, and in the case $a\neq b$ we have: $\widetilde{L}=\{a,b\}$ and $r_a=r_b=1$. Hence, by formula (\ref{f}), we obtain:
$$
|PR_n((a,b))|=n^{a+b}-n^{\max(a,b)}.
$$
The last formula can also be derived directly by the  definition of a prefix code, that is without using (\ref{f}).
\end{example}

\section{The sets $UD_n(L)$ for particular sequences $L$}\label{r7}

The situation is much more complicated if we want to obtain the formula for the number of elements in the set  $UD_n(L)$. Nowadays, there are various algorithms testing the unique decodability of a code. We can use them and try to obtain the formula for $|UD_n(L)|$ in some particular cases of the sequence  $L$. In this section, we make the calculations for an exemplary sequence of length three, as well as for the sequences  from Theorem~\ref{t3}. Our calculations simultaneously provide the full characterization of the corresponding  sets $UD_n(L)$.

The calculations are based on the Sardinas-Patterson algorithm (\cite{3}), which claims that a code $C$ is uniquely decodable if and only if $C$ is an injective sequence of non-empty words and   $D_i\cap D_0=\emptyset$ for all  $i\geq 1$, where the sets  $D_i$ ($i\geq 0$) are defined recursively as follows: $D_0$ is the set of the code words, and  for  $i\geq 1$ the set  $D_i$ is the set  of all non-empty words $w\in X^*$ which satisfy  the following condition: $D_{i-1}w\cap D_0\neq \emptyset$ or $D_0w\cap D_{i-1}\neq\emptyset$, where $D_iw:=\{vw\colon v\in D_i\}$.

\subsection{The sequence $L=(2,3,3)$}

At first, let us assume that the unique code word of length two consists of two different letters. So, let  $(xy,w,v)$ be a code such that $x,y\in X$, $x\neq y$, and $w,v\in X^3$, where $w\neq v$. We have three possibilities: (1) $(w,v)=(xyz, txy)$ for some $z,t\in X$, (2) $(w,v)=(zxy, xyt)$ for some $z,t\in X$, (3) the word $xy$ is neither a prefix of $w$ nor a prefix of $v$ or it is neither a suffix (final segment) of $w$ nor a suffix of $v$. In the third case, we obviously have $(xy, w,v)\in UD_n(L)$. In the case (1), we have: $(xy,w,v)=(xy,xyz,txy)$. Now,  if $(z,t)=(x,y)$, then  $(xy)^3=wv$, and hence $(xy,w,v)\notin UD_n(L)$. So, let us assume that $(z,t)\neq (x,y)$. We have now four possibilities: $z\notin\{x,t\}$, or $z=x\neq t$, or $z=t\neq x$, or $z=x=t$. If $z\notin\{x,t\}$, then $D_1=\{z\}$, $D_2=\emptyset$ and hence $(xy, w,v)\in UD_n(L)$. If $z=x\neq t$, then $t\neq y$ and hence $D_1=\{x\}$,  $D_2=\{y,yx\}$, $D_3=\emptyset$, which implies  $(xy,w,v)\in UD_n(L)$. If $z=t\neq x$, then $D_1=\{t\}$, $D_2=\{xy\}$, which implies $(xy,w,v)\notin UD_n(L)$. If $z=t=x$, then $D_1=\{x\}$, $D_2=\{x,yx,xy\}$ and hence $(xy,w,v)\notin UD_n(L)$. Thus  in the case (1), we obtain: $(xy,w,v)\notin UD_n(L)$ if and only if $(w,v)=(xyx,yxy)$ or $(w,v)=(xyz,zxy)$ for some $z\in X$. Consequently, in this case, there are exactly  $n+1$ codes $(xy,w,v)$ which are non-uniquely decodable. In the case (2), by taking the reverse of a code $(xy,w,v)$ and  using the same reasoning,  we also obtain that there are exactly $n+1$ codes which are non-uniquely decodable. Hence, if $x\neq y$, then the number of elements in the  set
$$
C(x,y):=\{(xy,w,v)\colon w,v\in X^3\}\cap UD_n(L)
$$
is equal to
$$
|C(x,y)|=n^3(n^3-1)-2(n+1).
$$

We now calculate for a fixed $x\in X$ the number of elements in the set
$$
C(x):=\{(xx,w,v)\colon w,v\in X^3\}\cap UD_n(L).
$$
For any $w,v\in X^3\setminus \{xxx\}$ with $w\neq v$ there are two cases: (1) $xx$ is both the prefix of at least one of the words $w$, $v$ and the suffix of at least one of the words $w$, $v$, (2) $xx$ is neither a prefix of $w$ nor a prefix of $v$ or it is neither a suffix of $w$ nor a suffix of $v$. In the second case, we have $(xx,w,v)\in UD_n(L)$.  In the first case, we have two possibilities: (1a)  $(xx,w,v)=(xx, xxy, zxx)$ or (1b) $(xx,w,v)=(xx,yxx,xxz)$ for some $y,z\in X\setminus\{x\}$. Both in the case (1a) and in the case (1b) we have: if $y=z$, then for the code $(xx,w,v)$ we obtain: $xx\in D_2$ and hence $(xx,w,v)\notin UD_n(L)$. If $y\neq z$, then $D_1\subseteq \{y,z\}$ and $D_2=\emptyset$, and hence $(xx,w,v)\in UD_n(L)$. Thus the number of all codes of the form $(xx,w,v)\in C(x)$ satisfying  (1) is equal to $2((n-1)^2-(n-1))$, and the number of all codes  $(xx,w,v)\in C(x)$ satisfying  (2) is equal to  $(n^3-1)(n^3-2)-2(n-1)^2$. Hence
$$
|C(x)|=(n^3-1)(n^3-2)-2(n-1).
$$
Finally, we obtain
$$
|UD_n(L)|=\sum_{x,y\in X, x\neq y}|C(x,y)|+\sum_{x\in X} |C(x)|=n(n-1)(n^6+n^5-n^4-2n^2-6).
$$
For comparison, we obtain by the formula (\ref{f}):
$$
|PR_n(L)|=n(n-1)(n^6+n^5-n^4-2n^3-n^2).
$$

\subsection{The sequences of the form $L=(a,\ldots,a,b)$, where  $a\mid b$}

Let $L=(a,\ldots,a,b)$ be a sequence of length $m>1$ such that $a\mid b$. If  $a=b$, then $L$ is constant and hence $UD_n(L)=PR_n(L)$. Let us assume that $q:=b/a>1$. If $C\in UD_n(L)$, then obviously the code  $C$ must be of the form $(v_1, \ldots, v_{m-1}, w)$ for some pairwise different  words $v_i$ ($1\leq i\leq m-1$) of length  $a$ and the word $w$ of length $b$ which is not of the form $v_{j_1} v_{j_2}\ldots v_{j_q}$ for some  $j_\iota\in\{1,\ldots, m-1\}$, $\iota=1,2,\ldots, q$. Conversely, let us assume that $C$ is an arbitrary code of the form $(v_1, \ldots, v_{m-1}, w)$, where the words $v_i$, $w$ are as above. We show that $C\in UD_n(L)$. Indeed, since $|w|=qa$, we have $w=w_1\ldots w_q$ for some words $w_i$ ($1\leq i\leq q$) each of  length $a$. Let  $1\leq i_0\leq q$ be the smallest index such that $w_{i_0}\notin\{v_1, \ldots, v_{m-1}\}$. For $1\leq i<i_0$ let us consider the word $u_i:=w_{i+1}\ldots w_q$. Because of the minimality of  $i_0$, none of the words $u_i$ ($1\leq i<i_0$) is a prefix of $w$. Hence for $1\leq i<i_0$ we have $D_i=\{u_i\}$ and for $i\geq i_0$ we have $D_i=\emptyset$. Thus  $D_i\cap D_0=\emptyset$ for each $i\geq 1$, and hence $C\in UD_n(L)$. Now, by easy calculation, we obtain the following formula:
$$
|UD_n(L)|=n^a(n^a-1)\ldots(n^a-m+2)(n^b-(m-1)^{b/a}).
$$
For comparison, we have by (\ref{f}):
$$
|PR_n(L)|=n^a(n^a-1)\ldots (n^a-m+2)(n^b-(m-1)n^{b-a}).
$$
In particular,  the above formula for $|UD_n(L)|$ also works  in the case $a=b$.

\section{The proofs of the main results}\label{r5}

In this section we derive our main results.

{\renewcommand{\thetheorem}{\ref{t4}}
\begin{theorem}
Let $L$ be a non-constant sequence such that the set $UD_n(L)$ is non-empty. Then we have
$$
\frac{|UD_n(L)|}{|PR_n(L)|}\geq 1+\frac{r_ar_b}{|PR_n((a,b))|},
$$
where  $a$ and $b$ are arbitrary two different values of $L$ and  $r_a$ (resp. $r_b$) is the number of those elements in $L$ which are equal to  $a$ (resp.  to $b$).
\end{theorem}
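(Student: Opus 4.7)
I would prove the inequality by a double-counting argument. Without loss of generality assume $a<b$; the statement is symmetric in $a,b$ because $|PR_n((a,b))|=|PR_n((b,a))|$ and $r_ar_b=r_br_a$. Let
$$
\mathcal{T}=\{(C,i,j):\,C\in PR_n(L),\,L_i=a,\,L_j=b\},
$$
so that $|\mathcal{T}|=r_ar_b\cdot|PR_n(L)|$. The plan is to build a map $\Phi\colon\mathcal{T}\to UD_n(L)\setminus PR_n(L)$ each of whose fibres has cardinality at most $|PR_n((a,b))|$. Counting the edges of the induced bipartite graph two ways will then yield $r_ar_b\cdot|PR_n(L)|\le|PR_n((a,b))|\cdot|UD_n(L)\setminus PR_n(L)|$, which rearranges to the target inequality.

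The map $\Phi$ should alter $C$ only at positions $i$ and $j$, leaving the $m-2$ other code words untouched, and replace the pair $(v_i,v_j)$ by a canonical pair $(v_i',v_j')$ of lengths $(a,b)$ with $v_i'$ a prefix of $v_j'$, so that $\Phi(C,i,j)\notin PR_n(L)$. A natural concrete choice is $v_i'=v_i$ and $v_j'=v_i\cdot s$ for a suffix $s\in X^{b-a}$ selected by a fixed rule from $v_j$ and the remaining code words. The rule must guarantee that $\Phi(C,i,j)$ is still uniquely decodable: the only new overlap is the prefix relation $v_i\prec v_j'$, contributing the single word $s$ to $D_1$ in the Sardinas--Patterson filtration, and $s$ is chosen to ensure that the subsequent sets $D_2,D_3,\ldots$ never meet the set $D_0$ of code words. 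The strict inequalities $N_k>r_{\nu_k}$ for $k<l$ recorded just after formula~(\ref{f}) supply enough slack to pick such a safe $s$.

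For the fibre bound, suppose $\Phi(C,i,j)=C'$. The code words of $C$ outside the positions $i$ and $j$ coincide with those of $C'$. Since no other length-$a$ code word can be a prefix of $v_j'=v_i\cdot s$ (the code words of $C$ are pairwise distinct), and the rule producing $s$ also prevents $v_j'$ from becoming a prefix of any longer code word, the pair $(i,j)$ is the unique pair of positions of $C'$ exhibiting a prefix relation, so it is recoverable from $C'$. The only remaining freedom in $(C,i,j)$ is the choice of the original pair $(v_i,v_j)$, which must be an ordered prefix pair of lengths $(a,b)$ compatible with the rest of $C'$. Since such pairs inject into $PR_n((a,b))$, each fibre of $\Phi$ has at most $|PR_n((a,b))|$ elements, as required.

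The main obstacle is specifying the rule for the suffix $s$ explicitly enough to certify both the unique decodability of $\Phi(C,i,j)$ and the recoverability of $(i,j)$ from $C'$. This calls for a Sardinas--Patterson case analysis along the lines of Section~\ref{r7}, combined with the Kraft-procedure slack of Section~\ref{r3} to exhibit a safe $s$ in every situation; in particular one must avoid the "bad" suffixes for which the sets $D_k$ would eventually cycle back into $D_0$, as illustrated by the examples in that section.
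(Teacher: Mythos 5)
Your double-counting framework is sound in outline, but the proof has a genuine gap at its core: the map $\Phi$ is never actually constructed. Everything hinges on exhibiting, for an arbitrary prefix code $C$ and positions $i,j$, a suffix $s\in X^{b-a}$ such that replacing $v_j$ by $v_i\cdot s$ yields a code that is still uniquely decodable, and you explicitly defer this (``the main obstacle is specifying the rule for the suffix $s$\dots''). This is not a routine detail. Once $v_i$ is a prefix of $v_j'=v_i s$, unique decodability of the modified code requires controlling the entire Sardinas--Patterson filtration of a code about which you know essentially nothing beyond its being prefix outside positions $i,j$; the examples in Section~\ref{r7} (e.g.\ $(xy,xyz,zxy)$) show that a single ``bad'' choice of $s$ destroys decodability, and the slack $N_k>r_{\nu_k}$ from the Kraft procedure gives you room to avoid a forbidden \emph{word}, not to steer an unbounded iteration of the sets $D_k$. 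Your recoverability claim for the fibre bound has the same problem: that $(i,j)$ is ``the unique pair of positions exhibiting a prefix relation'' is an unproved property of the unspecified rule (a code word of some intermediate length could become a prefix of $v_i s$ without being a prefix of $v_i$). As written, the proposal is a plan for a proof, not a proof.

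The paper avoids this difficulty entirely with a different idea: instead of modifying prefix codes, it counts the subfamily $PR_{n,a,b}(L)$ of prefix codes that contain the two specific words $0^{a-1}1$ and $0^{b-1}1$, shows by a modified Kraft procedure that $|PR_{n,a,b}(L)|\geq r_ar_b\,|PR_n(L)|/|PR_n((a,b))|$, and then maps each such $C$ to its reverse $C^R$. The reverse of a prefix code is automatically uniquely decodable (no Sardinas--Patterson analysis needed), and it fails to be prefix because $1\,0^{a-1}$ is a prefix of $1\,0^{b-1}$; injectivity of reversal then gives $|UD_n(L)|\geq|PR_n(L)|+|PR_{n,a,b}(L)|$. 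If you want to salvage your approach, the missing ingredient is precisely some device, like this reversal trick, that certifies unique decodability of the perturbed code for free rather than case by case.
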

\addtocounter{theorem}{-1}}
\begin{proof}
We will use the notations as in Section~\ref{r3}, i.e. by  $\widetilde{L}:=\{\nu_1, \ldots, \nu_l\}$, we denote the set of values of the sequence $L$ ordered from the smallest to the largest, i.e. $\nu_1<\nu_2<\ldots<\nu_l$ and by   $r_{\nu_i}$ ($1\leq i\leq l$) we denote  the number of those elements in $L$ which are equal to  $\nu_i$. Without losing generality, we can assume that $a<b$. Let $i_0, i_1\in \{1,\ldots, l\}$ be indices corresponding to the values $a,b\in\widetilde{L}$, i.e. $\nu_{i_0}=a$, $\nu_{i_1}=b$. Let us fix two different letters $0, 1\in X$ and let $PR_{n, a, b}(L)$ be the subset of $PR_n(L)$ consisting of prefix codes with  the words  $w_a:=0^{a-1}1$,  $w_b:=0^{b-1}1$ as code words.

An arbitrary code $C\in PR_{n, a, b}(L)$ can be constructed as follows. At first, for every $1\leq i<i_0$, we choose the code words of length $\nu_i$ and  arrange them in the sequence $C$ in the same way as in the Kraft' procedure keeping only in mind not to choose the ``zero'' word $0^{\nu_i}$. Thus for every $1\leq i< i_0$  the number of available words for the code words of length $\nu_i$ is equal to $N_i-1$ and hence, the number of ways to construct these code words and arrange them in the sequence $C$ is equal to
$$
{N_i-1\choose r_{\nu_i}}r_{\nu_i}!=\frac{N_i-r_{\nu_i}}{N_i}\cdot {N_i\choose r_{\nu_i}}r_{\nu_i}!=n^{\nu_i-\nu_{i+1}}\frac{N_{i+1}}{N_i}\cdot {N_i\choose r_{\nu_i}}r_{\nu_i}!.
$$
Note that  for $1\leq i<i_0$ we have $N_i>r_{\nu_i}$, and hence the above number is indeed positive.

For the construction of the code words of length $\nu_{i_0}=a$, we  also remember that $0^a$ can not be a code word. Beside of that, the word $w_a=0^{a-1}1$ must be  a code word. Hence, we need to choose  $r_a-1$ words of length $a$ among all   $N_{i_0}-2$ available words, and next to arrange the chosen words together with the word $w_a$ in $r_a$ available positions in the sequence $C$. Thus the number of ways to construct the code words of length $a$ and arrange them in $C$ is equal to
$$
{N_{i_0}-2\choose r_a-1}r_a!=\frac{r_a}{N_{i_0}-1}\cdot n^{\nu_{i_0}-\nu_{i_0+1}}\cdot\frac{N_{i_0+1}}{N_{i_0}}\cdot {N_{i_0}\choose r_a}r_a!.
$$
Since $i_0<l$, we have $N_{i_0}>r_a$ and hence this number is indeed positive.

In the next step, we construct for every  $i_0<i<i_1$ the code words of length  $\nu_i$. We are still restricted  to the words different from  $0^i$ and hence, the number of ways to  do this is equal to
$$
{N_i-1\choose r_{\nu_i}}r_{\nu_i}!=n^{\nu_i-\nu_{i+1}}\frac{N_{i+1}}{N_i}\cdot {N_i\choose r_{\nu_i}}r_{\nu_i}!.
$$

For the construction of the code words of length $\nu_{i_1}=b$, we must remember that the word $w_b=0^{b-1}1$ is a code word. But now, we can choose the ``zero'' word $0^b$ as a code word. Hence, we need to choose   $r_b-1$ words  among $N_{i_1}-1$ available words. In consequence, the number of ways to construct the code words of length $b$ and arrange them in the sequence $C$ is equal to
$$
{N_{i_1}-1\choose r_b-1}r_b!=\frac{r_b}{N_{i_1}}\cdot {N_{i_1}\choose r_b}r_b!.
$$
Since $N_{i_1}\geq r_b$,  this number  is indeed positive.

In the final step, we construct for every  $i_1<i\leq l$  the code words of length $\nu_i$. This construction can be done in ${N_i\choose r_{\nu_i}}r_{\nu_i}!$ ways, as we can follow exactly in the same way as in the Kraft's procedure.

As a result of the above procedure, we see that the number of ways to construct an arbitrary code from the set $PR_{n,a,b}(L)$ is equal to
$$
\prod_{1\leq i<i_1}\left(n^{\nu_i-\nu_{i+1}}\cdot \frac{N_{i+1}}{N_i}\right)\cdot \frac{r_a}{N_{i_0}-1}\cdot \frac{r_b}{N_{i_1}}\cdot \prod_{i=1}^l {N_i\choose r_{\nu_i}}r_{\nu_i}!=\frac{r_ar_b}{n^b(N_{i_0}-1)}|PR_n(L)|.
$$
Hence, we obtain
$$
|PR_{n,a,b}(L)|=\frac{r_ar_b}{n^b(N_{i_0}-1)}|PR_n(L)|.
$$
By the inequality  $N_{i_0}\leq n^{\nu_{i_0}}=n^a$, we have:
$$
|PR_{n,a,b}(L)|\geq \frac{r_ar_b}{n^b(n^a-1)}|PR_n(L)|=r_ar_b\frac{|PR_n(L)|}{|PR_n((a,b))|}.
$$

To finish the proof, it suffices to observe that if  $C\in PR_{n,a,b}(L)$, then for the reverse $C^R$ we have $C^R\in UD_n(L)\setminus PR_n(L)$. Indeed, the words $(w_a)^R$, $(w_b)^R$ are code words in $C^R$ and the word $(w_a)^R=1(0^{a-1})$ is a prefix of the word $(w_b)^R=1(0^{b-1})$.  Since for the arbitrary codes $C_1$, $C_2$ we have $C_1=C_2 \Leftrightarrow C_1^R=C_2^R$, we conclude the equality
$$
|PR_{n,a,b}(L)|=|\{C^R\colon C\in PR_{n,a,b}(L)\}|.
$$
In consequence, we obtain
$$
|UD_n(L)|\geq|PR_n(L)|+|PR_{n,a,b}(L)|\geq |PR_n(L)|\left(1+\frac{r_ar_b}{|PR_n((a,b))|}\right).
$$
\qed
\end{proof}

{\renewcommand{\thetheorem}{\ref{t3}}
\begin{theorem}
If the set  $UD_n(L)$ is non-empty, then the statements are equivalent:
\begin{itemize}
\item[(i)] $FD_n(L)=UD_n(L)$,
\item[(ii)] the length of   $L$ is not greater than  2 or, after  reordering the  elements of $L$, we have  $L=(a,a, \ldots, a, b)$, where   $a\mid b$.
\end{itemize}
\end{theorem}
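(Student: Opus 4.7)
The plan is to handle the two directions separately. For $(ii)\Rightarrow(i)$, the case $|L|\le 2$ is a classical fact recorded in the preliminaries, and the real content is when $L=(a,\ldots,a,b)$ with $a\mid b$ and $a<b$, $b=qa$. I take an arbitrary $C=(v_1,\ldots,v_{m-1},w)\in UD_n(L)$, where the characterization in Section~\ref{r7} gives that the $v_i$ are distinct and $w$ is not of the form $v_{j_1}\cdots v_{j_q}$, and assume for contradiction that $C$ has infinite delay. Then some infinite word $u\in X^\omega$ has two factorizations into code words starting with different code words. Because every code-word length is a multiple of $a$, every parse break lies at a multiple of $a$, so writing $u=u_1u_2\cdots$ in $a$-blocks $u_i\in X^a$, I may assume one factorization begins $u_1=v_{j_1}$ and the other begins with $w=u_1u_2\cdots u_q$. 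I then prove by induction on $k=1,\ldots,q$ that $u_k\in V:=\{v_1,\ldots,v_{m-1}\}$: the step uses the first factorization at block $k$, where either $u_k\in V$ (done) or the next code word is $w$, forcing $u_ku_{k+1}\cdots u_{k+q-1}=w=u_1\cdots u_q$ and hence $u_k=u_1=v_{j_1}\in V$. Thus $w=u_1\cdots u_q$ is a concatenation of $v$-words, contradicting the Section~\ref{r7} characterization.

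For $(i)\Rightarrow(ii)$ I contrapose: assuming $|L|\ge 3$ and $L$ is not of the special form, I construct $C\in UD_n(L)\setminus FD_n(L)$ by generalizing Example~\ref{p1}. Fix distinct letters $0,1\in X$ and form the ambiguity-generating triple
\[
p=10^{a'-1},\quad q=10^{c'-1},\quad s=0^{d'},
\]
where $a'<c'$ and $d'$ are chosen below; the identity $p\cdot s^\infty=q\cdot s^\infty=10^\infty$ then gives an infinite word with two factorizations starting differently, hence infinite delay. The lengths $a',c',d'$ are picked from $L$ in three sub-cases: (A) if $L$ has at least three distinct values $\nu_1<\nu_2<\cdots<\nu_l$, take $a'=\nu_1$, $c'=\nu_2$, $d'=\nu_l$; (B) if $L$ has exactly two distinct values $a<b$ with $r_b\ge 2$, take $a'=a$ and $c'=d'=b$, consuming two of the $r_b$ slots; (C) if $L=(a^{m-1},b)$ with $a\nmid b$, take $a'=d'=a$ and $c'=b$, consuming two of the $m-1\ge 2$ slots of value $a$. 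The remaining $m-3$ code words of $C$ I choose as a prefix code whose words all start with $11$, which is disjoint from the initial segments $10$ of $p,q$ and $00$ of $s$; since no added code word is a prefix of any of $p,q,s$ nor conversely, the Sardinas-Patterson sets $D_i$ for $C$ coincide with those computed for the triple alone.

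The hardest step will be verifying unique decodability of the triple in case (C), where $s=0^a$ is itself a code word. Here the Sardinas-Patterson chain produces dangling suffixes of the form $0^k$ whose lengths evolve by a subtractive step resembling the Euclidean algorithm on $a$ and $b$; the hypothesis $a\nmid b$ is precisely what prevents the chain from ever producing $0^a$, which would hit $s\in D_0$ and violate unique decodability, while the sets $D_i$ nevertheless remain non-empty, reflecting the infinite delay. The unique-decodability checks in cases (A) and (B) are routine Sardinas-Patterson computations analogous to Example~\ref{p1}. Minor edge cases—for instance when the smallest length of $L$ equals $1$, or when the alphabet is small—are handled by variants of the construction, either by picking $a'$ to be a value greater than $1$ when that option is available or by using an additional alphabet letter to host the $11$-prefix part of $C$.
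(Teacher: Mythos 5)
Your direction (ii)$\Rightarrow$(i) is correct and takes a genuinely different route from the paper: you combine the infinite-word characterization of infinite delay with the description of $UD_n((a,\ldots,a,b))$ from Section~\ref{r7} and a block-periodicity argument, whereas the paper directly shows the delay is at most $b$ by analysing the first $b$ letters of a factorizable word. One imprecision: in your induction, once the first factorization uses $w$ at block $k$ there is no parse break at block $k+1$ from which to continue; what you actually obtain is $u_{k+i-1}=u_i$ for $i=1,\ldots,q$, and iterating this shift (indices drop by $k-1\geq 1$ each time) yields all of $u_1,\ldots,u_q\in V$ at once, so the induction should terminate there rather than proceed to block $k+1$. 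This is easily repaired.

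The direction (i)$\Rightarrow$(ii) has genuine gaps. First, the fill-in claim---that the remaining $m-3$ code words can be chosen with prefix $11$ and therefore leave the Sardinas--Patterson sets of the triple untouched---fails for capacity reasons: it requires the Kraft mass of the remaining lengths to be at most $n^{-2}$, which the hypothesis $\sum_i n^{-a_i}\leq 1$ does not give. For $n=2$ and $L=(3,3,3,3,3,3,3,4)$ (your case (C), $3\nmid 4$) the triple is $(100,1000,000)$ and you must add five of the six remaining length-$3$ words; words beginning with $0$ are then unavoidable, they do enter the Sardinas--Patterson chain ($0\in D_1$ because $100$ is a prefix of $1000$; the words $001,010,011$ push $00$ into $D_2$ and then $1$ into some $D_i$; and $1\cdot 000=1000$ puts the code word $000$ into $D_{i+1}$), and in fact all six such completions fail the test, so your case-(C) triple does not extend to a uniquely decodable code for this $L$ at all. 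Second, the unique-decodability verification in case (C) is exactly the hard core of this direction and you defer it; the paper's version needs a structural lemma ($D_i$ is contained in the set of proper suffixes of code words) and, crucially, uses the triple $(1^a,\,1^a0^{b-a},\,0^a)$ together with the requirement that $1^{a-\eta}0^{\eta}$ (with $\eta$ the remainder of $b-a$ modulo $a$) is \emph{not} a code word; your construction has no analogue of this exclusion. Third, in your cases (A) and (B) the paper avoids all of this by a trick you are missing: it builds a \emph{prefix} code containing $0^{a-1}1$, $0^{b-1}1$ and a suitable all-zeros word (existence and fill-in come for free from the modified Kraft procedure of Section~\ref{r3}) and then reverses it; the reverse is automatically uniquely decodable and visibly has infinite delay, with no Sardinas--Patterson computation and no capacity problem. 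Without either that trick or completed case-by-case verifications, your proof of (i)$\Rightarrow$(ii) is not established.
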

\addtocounter{theorem}{-1}}
\begin{proof}
At first, we show the implication  (ii)$\Rightarrow$ (i). If  $L$ has the length at most 2, then according to~\cite{2}, every code in   $UD_n(L)$ has finite delay. If $L=(a,a, \ldots, a, b)$, where $a\mid b$, then we have two possibilities: $a=b$ or $a\neq b$. In the first case $L$ is constant and then each  $C\in UD_n(L)$ is a prefix code, which implies that $C$ has  finite delay.

If  $a\neq b$, then each $C\in UD_n(L)$ also has  finite delay. To show this, let us assume that we have picked up the first $b$ letters of a word  $u\in X^*$, for which we only know that it is  factorizable into code-words. Let $w$ be the prefix of $u$ of length $b$. We have two possibilities:  $w$ is not a code word or  $w$ is a code word.

In the first case, since there are no code words longer than $b$ and all the code words shorter than $b$ have the length  $a$, the prefix of length  $a$ in the word $u$ must be a code word and $u$ begins with  this code word.

In the second case, $w$ is the code word with which the word $u$ starts. To show this, let us suppose contrary that $u$ does not begin with  $w$. Since $w$ is the only code word of length $b$ and all the other code words have the length $a<b$, there must be the maximum number  $k\geq 1$ such that the word $w_1\ldots w_k$ is a prefix of $w$, where each  $w_i$ is a code word of length $a$. Now, if  $w_1\ldots w_k=w$, then $C$ would not be uniquely decodable. So, let us assume that  $w_1\ldots w_k$ is a proper prefix of $w$. In particular, we obtain  $ka<b$. Hence there is a code word $v$ such that $w_1\ldots w_kv$ is a prefix of $u$. Now, if $|v|=a$, then in view of the inequality $ka<b$ and the divisibility $a\mid b$, we would have $(k+1)a\leq b$ and consequently, the word $w_1\ldots w_kv$ would be a prefix of  $w$, contrary to the maximality of $k$. Hence $|v|=b$, which implies $v=w$. Thus $w$ must be a prefix of  $w_1\ldots w_kw$. But then the divisibility  $a\mid b$ implies the equality  $w=(w_1\ldots w_k)^sw_1\ldots w_{k'}$ for some $s\geq 0$, $1\leq k'<k$, and again we have a contradiction with the assumption that $C$ is uniquely decodable. Thus in each case it is enough to pick up at most $b$ letters of the word $u$ to know  which code word begins this word.

To show (i)$\Rightarrow$ (ii) let us assume that $L$ does not satisfy the condition (ii).  We must show that there is a code $C\in UD_n(L)$ with  infinite delay. The sequence  $L$ has the length at least three and $L$ is not constant. Let  $a$ and $b$ be the two smallest values of $L$ and let us assume that $a<b$. Let us define in the same way as in the proof of Theorem~\ref{t4} the set  $\widetilde{L}=\{\nu_1, \ldots, \nu_l\}$ of the values of $L$,  the sequence $(r_{\nu_i})_{1\leq i\leq l}$,  the words $w_a=0^{a-1}1$, $w_b=0^{b-1}1$, and the subset  $PR_{n, a, b}(L)\subseteq PR_n(L)$. In particular, we have: $\nu_1=a$, $\nu_2=b$.

If $r_{b}>1$, then we can use the construction of the code described in the proof of Theorem ~\ref{t4} and obtain a code $C\in PR_{n,a,b}(L)$ such that one of its code words of length $b$ is $0^{b}$. Since  $C^R\in UD_n(L)$ and the words $1(0^{a-1})$, $1(0^{b-1})$, $0^{b}$ are code words in  $C^R$, we see, by analogy to Example~\ref{p1}, that $C^R$ has  infinite delay.

If $r_{b}=1$ and  $L$ has at least three values, then there is the smallest   $1\leq i_0\leq l$ such that  $\nu_{i_0}>b$. Similarly as in the previous case, we can use the construction from the proof of Theorem~\ref{t4} and construct a code $C\in PR_{n,a,b}(L)$, such that   $0^{\nu_{i_0}}$ is one of the code words. Then the words  $1(0^{a-1})$, $1(0^{b-1})$ and $0^{\nu_{i_0}}$ are code words in $C^R\in UD_n(L)$, and similarly as above, we obtain that $C^R$ has  infinite delay.

The last case is  when  $r_{b}=1$ and the only values of $L$ are  $a$ and  $b$. Since $L$ does not satisfy the condition (ii), we obtain $a\nmid b$. Let  $\eta\in\{1,\ldots, a-1\}$ be the remainder from the division of $b-a$ by $a$. Then we have $b-a=qa+\eta$ for some integer $q\geq 0$. Since  $L$ has the length at least three, we have $2\leq r_{a}<n^a$. Thus, there is an injective code  $C$ with the sequence  $L$ as the sequence of code word lengths and such  that  the words $1^{a}0^{b-a}$, $1^{a}$, $0^{a}$ are  the code words and the word   $1^{a-\eta}0^\eta$ is not a code word. Then the infinite word $1^{a}0^\infty$ has two factorizations into code words:
\begin{eqnarray*}
1^{a}-0^{a}-0^{a}-\ldots,\\
1^{a}0^{b-a}-0^{a}-0^{a}-\ldots.
\end{eqnarray*}
Thus it is enough to show that  $C$ is uniquely decodable. For this aim, we need to show that $D_i\cap D_0=\emptyset$ for all  $i\geq 1$, where the sets  $D_i$ ($i\geq 0$) are constructed according to the  Sardinas-Patterson algorithm, i.e. $D_0$ is the set of the code words, and  for  $i\geq 1$ the set  $D_i$ consists of all non-empty words $w\in X^*$ for which  the following condition holds: $D_{i-1}w\cap D_0\neq \emptyset$ or $D_0w\cap D_{i-1}\neq\emptyset$.

Let $S$ be the set of all non-empty words which are proper suffixes (final segments) of the code words.  Obviously, every word in  $S$ is shorter than $b$. Hence the intersection $S\cap D_0$ contains only the code words of length $a$ which are the proper suffixes of the other code words. Since $1^{a}0^{b-a}$ is the only code word of length greater than $a$, the set  $S\cap D_0$ consists of the code words of length  $a$ which are suffixes of the code word $1^{a}0^{b-a}$.
Thus, if $b-a>a$, then $S\cap D_0=\{0^{a}\}$. If $b-a<a$, then  $\eta=b-a$, and hence $S\cap D_0=\emptyset$, as the suffix of length $a$ in the code word $1^{a}0^{b-a}$ is   $1^{2a-b}0^{b-a}=1^{a-\eta}0^\eta$, which, by our assumption, is not a code word. Hence, we obtain:
\begin{equation}\label{alt}
S\cap D_0=\left\{
\begin{array}{ll}
\emptyset, &\mbox{\rm if}\;b-a<a,\\
\{0^{a}\}, &\mbox{\rm if}\;b-a>a.
\end{array}
\right.
\end{equation}

\begin{lemma}\label{l2}
For every  $i\geq 1$ the inclusion $D_i\subseteq S$ holds.
\end{lemma}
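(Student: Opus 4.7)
The plan is to prove the inclusion by induction on $i$, directly from the recursive definition of the Sardinas--Patterson sets $D_i$. The key observation, which drives both the base case and the induction step, is that the membership condition $D_{i-1} w \cap D_0 \neq \emptyset$ or $D_0 w \cap D_{i-1} \neq \emptyset$ always presents $w$ as the tail of a concatenation $uw$, where the ``head'' $u$ is non-empty, and where $uw$ is either a code word or (inductively) a non-empty proper suffix of one. In either situation, $w$ ends up being a proper suffix of some code word, hence an element of $S$.

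For the base case $i=1$, I would observe that for $i=1$ both alternatives in the defining condition collapse to $D_0 w \cap D_0 \neq \emptyset$, so $w \in D_1$ means there exist $u_1, u_2 \in D_0$ with $u_1 w = u_2$. Since $w$ is non-empty, $u_1$ is a strict prefix of $u_2$ and $w$ is a non-empty proper suffix of the code word $u_2$, which gives $w \in S$.

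For the inductive step, assuming $D_{i-1} \subseteq S$, I take $w \in D_i$ and split into the two cases of the definition. In the first case there is $u \in D_{i-1}$ with $uw \in D_0$; since $u$ is non-empty, $w$ is a non-empty proper suffix of the code word $uw$, so $w \in S$. In the second case there is $u \in D_0$ with $uw \in D_{i-1}$, and by the inductive hypothesis $uw$ is a proper suffix of some code word $v$; since $u$ is non-empty, $|w| < |uw| \le |v|$, so $w$ is a non-empty proper suffix of $v$, giving $w \in S$ as well.

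I do not expect any real obstacle here: the statement is a structural fact that follows essentially from the form of the recursion, and the proof uses nothing about the specific code $C$ beyond the conventions that code words and the elements of the $D_i$ are non-empty. The delicate content of the surrounding argument lies in combining this lemma with the explicit description (\ref{alt}) of $S \cap D_0$ to conclude $D_i \cap D_0 = \emptyset$ for all $i \ge 1$, but that will come after the lemma is in hand.
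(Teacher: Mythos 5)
Your proof is correct and follows essentially the same route as the paper: induction on $i$, with the two alternatives in the Sardinas--Patterson recursion each exhibiting $w$ as a proper suffix of a code word (directly in the first case, via the inductive hypothesis in the second). The only cosmetic difference is that the paper handles the base case by computing $D_1=\{0^{b-a}\}$ explicitly for the specific code, whereas you argue it generically; both are fine.
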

\begin{proof}[of Lemma~\ref{l2}]
By the definition of $D_1$, we have $D_1=\{0^{b-a}\}\subseteq S$. Let us assume inductively that $D_i\subseteq S$ for some $i\geq 1$. Let $w\in D_{i+1}$ be an arbitrary word. Then  $D_{i}w\cap D_0\neq \emptyset$ or $D_0w\cap D_{i}\neq\emptyset$. In the first case, we have $vw\in D_0$ for some nonempty word $v\in X^*$, i.e.  $w$ is a proper suffix of the code word $vw$, and hence $w\in S$. In the second case, we have $vw\in D_i$ for some  $v\in X^*$. By the inductive assumption, we obtain $vw\in S$, i.e. $vw$ is a proper suffix of a code word, and hence $w$ is also a proper suffix of this code word. Thus $w\in S$ and consequently, we have $D_{i+1}\subseteq S$.\qed
\end{proof}

Suppose now that $D_i\cap D_0\neq \emptyset$ for some $i\geq 1$.  Since $D_i\subseteq S$, we obtain by (\ref{alt}): $D_i\cap D_0=\{0^{a}\}$. Consequently, there is the smallest number  $i\geq 1$ such that  $0^{\lambda a}\in D_i$ for some integer $\lambda\geq 1$. Since $D_1=\{0^{b-a}\}$ and $a\nmid b$, we have $i\geq 2$. By the definition of the set $D_i$ we have $D_{i-1}0^{\lambda a}\cap D_0\neq\emptyset$ or $D_00^{\lambda a}\cap D_{i-1}\neq\emptyset$. In the first case, we obtain that  $v0^{\lambda a}$ is a code word for some  $v\in D_{i-1}$. Since $|v0^{\lambda a}|>a$, it must be $v0^{\lambda a}=1^{a}0^{b-a}$, and hence $v=1^{a}0^{b-(\lambda+1)a}$. Since $v\in S$ and $|v|>a$, the word $v$ must be a suffix of the code word $1^{a}0^{b-a}$ and we obtain a contradiction because the word  $1^{a}0^{b-(\lambda+1)a}$ is not a suffix of $1^{a}0^{b-a}$.

In the second case, we have $v0^{\lambda a}\in D_{i-1}$ for some code word $v\in D_0$. Since $D_{i-1}\subseteq S$ and $|v0^{\lambda a}|>a$, the word $v0^{\lambda a}$ must be a suffix of the code word $1^{a}0^{b-a}$. Since $v$ is a code word, we obtain $|v|=a$. Thus  $v$ must be of the form $0^a$ or $1^a$ or $1^{a-\gamma}0^{\gamma}$ for some integer $0<\gamma\leq a-1$. If $v=0^{a}$, then  $0^{(\lambda+1)a}=v0^{\lambda a}\in D_{i-1}$ and we obtain a contradiction with the minimality of $i$. If $v=1^{a}$, then the word $v0^{\lambda a}=1^{a}0^{\lambda a}$ is a suffix of the code word $1^{a}0^{b-a}$; consequently,  it must be $\lambda a=b-a$, and again we  have a contradiction with $a\nmid b$. Hence, it must be   $v=1^{a-\gamma}0^{\gamma}$ for some integer $0<\gamma\leq a-1$. But then  $v0^{\lambda a}=1^{a-\gamma}0^{\gamma+\lambda a}$. Consequently, the word
$1^{a-\gamma}0^{\gamma+\lambda a}$ is a suffix of the code word $1^{a}0^{b-a}$. In particular, we obtain $\gamma+\lambda a=b-a$. But, since $b-a=qa+\eta$ and $0<\eta\leq a-1$, we obtain $\gamma=\eta$ and $\lambda=q$. Thus $v=1^{a-\eta}0^\eta$ and we have a contradiction with the assumption that  $1^{a-\eta}0^\eta$ is not a code word. Consequently $D_i\cap D_0=\emptyset $ for every $i\geq 1$. Thus $C\in UD_n(L)$, which completes  the proof of Theorem~\ref{t3}.\qed
\end{proof}

\end{document}